\newtheorem {Theorem}                 {Theorem}         [section]
\newtheorem {theorem}      [Theorem]  {Theorem}
\newtheorem {myalgorithm}    [Theorem]  {Algorithm}
\newtheorem {lemma}        [Theorem]  {Lemma}
\journal{arXiv}
\begin{document}
	\begin{frontmatter}
		\title{Minimum $2$-vertex-twinless connected spanning subgraph problem}
		\author{Raed Jaberi}
		
		\begin{abstract}  
			
		Given a $2$-vertex-twinless connected directed graph $G=(V,E)$, the minimum $2$-vertex-twinless connected spanning subgraph problem is to find a minimum cardinality edge subset $E^{t} \subseteq E$ such that the subgraph $(V,E^{t})$ is $2$-vertex-twinless connected. Let $G^{1}$ be a minimal $2$-vertex-connected subgraph of $G$. In this paper we present a $(2+a_{t}/2)$-approximation algorithm for the minimum $2$-vertex-twinless connected spanning subgraph problem, where $a_{t}$ is the number of twinless articulation points in $G^{1}$. 
		\end{abstract} 
		\begin{keyword}
			Directed graphs \sep approximation algorithm  \sep Graph algorithms \sep twinless articulation point
		\end{keyword}
	\end{frontmatter}
	\section{Introduction}
	
	Let $G=(V,E)$ be a twinless strongly connected graph. A vertex $v\in V$ is called a twinless articulation point if the subrgraph obtained from $G$ by removing the vertex $v$ is not twinless strongly connected. 
	A twinless strongly connected graph $G$ is called $k$-vertex-twinless-connected if $|V|\geq k+1$ and for each $U\subset V$ with $|U|<k$, the induced subgraph on $V\setminus U$ is twinless strongly connected. Therefore, a twinless strongly connected graph $G$ is $2$-vertex-twinless-connected if and only if $|V|\geq 3$ and it has no twinless articulation points. Given a $k$-vertex-twinless-connected graph $G=(V,E)$, the minimum $k$-vertex-twinless-connected spanning subgraph problem (denoted by MKVTCS) consist in finding a subset $E_{kt}\subseteq E$ of minimum size such that the subgraph $(V,E_{kt})$ is $k$-vertex-twinless-connected. The MKVTCS problem is NP-hard for $k\geq 1$. Note that an optimal solution for minimum $2$-vertex-connected spanning subgraph (M2VCS) problem is not necessarily an optimal solution for the M2VTCS problem, as illustrated in Figure \ref{figure:exampleoptimalsolutions}.

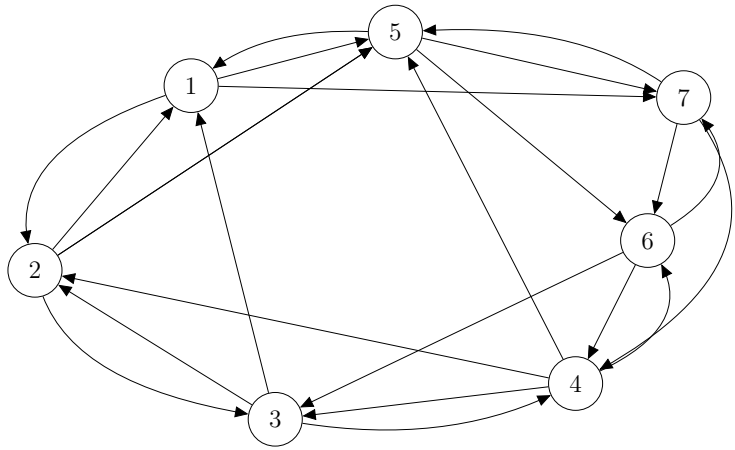
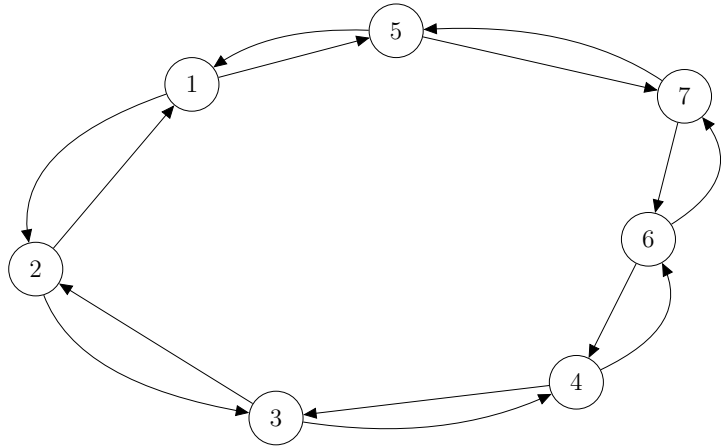
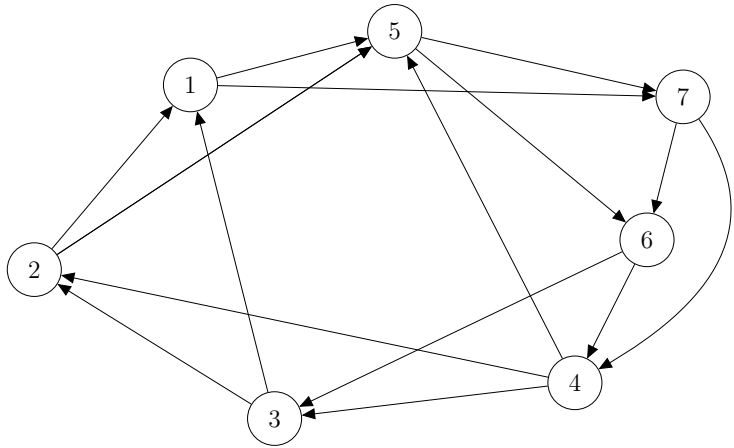
\begin{figure}[htp]
	\centering
	
	\subfigure[]{
	
\scalebox{0.79}{

		\begin{tikzpicture}[xscale=2]
		\tikzstyle{every node}=[color=black,draw,circle,minimum size=0.9cm]
		\node (v1) at (-1.2,3.1) {$1$};
		\node (v2) at (-2.5,0) {$2$};
		\node (v3) at (-0.5, -2.5) {$3$};
		\node (v4) at (2,-1.9) {$4$};
		\node (v5) at (0.5,4) {$5$};
		\node (v6) at (2.6,0.5) {$6$};
		\node (v7) at (2.9,2.9) {$7$};
	
		\begin{scope}   
		\tikzstyle{every node}=[auto=right]   
		\draw [-triangle 45] (v2) to (v5);
	\draw [-triangle 45] (v2) to (v1);
	\draw [-triangle 45] (v1) to[bend right] (v2);
	\draw [-triangle 45] (v1) to (v5);
	\draw [-triangle 45] (v5) to[bend right] (v1);
	\draw [-triangle 45] (v5) to (v7);
	\draw [-triangle 45] (v7) to[bend right] (v5);
	\draw [-triangle 45] (v7) to (v6);
	\draw [-triangle 45] (v6) to[bend right] (v7);
	\draw [-triangle 45] (v6) to (v4);
	\draw [-triangle 45] (v4) to[bend right] (v6);
	\draw [-triangle 45] (v4) to (v3);
	\draw [-triangle 45] (v3) to[bend right] (v4);
	\draw [-triangle 45] (v3) to (v2);
	\draw [-triangle 45] (v2) to[bend right] (v3);
	\draw [-triangle 45] (v1) to (v7);
	\draw [-triangle 45] (v5) to (v6);
	\draw [-triangle 45] (v6) to (v3);
	\draw [-triangle 45] (v4) to (v2);
	\draw [-triangle 45] (v3) to (v1);
	\draw [-triangle 45] (v4) to (v5);
	\draw [-triangle 45] (v2) to (v5);
	\draw [-triangle 45] (v7) to [bend left](v4);
		\end{scope}
		\end{tikzpicture}
	}
	}
	\subfigure[]{
		\scalebox{0.79}{
		\begin{tikzpicture}[xscale=2]
		\tikzstyle{every node}=[color=black,draw,circle,minimum size=0.9cm]
		\node (v1) at (-1.2,3.1) {$1$};
		\node (v2) at (-2.5,0) {$2$};
		\node (v3) at (-0.5, -2.5) {$3$};
		\node (v4) at (2,-1.9) {$4$};
		\node (v5) at (0.5,4) {$5$};
		\node (v6) at (2.6,0.5) {$6$};
		\node (v7) at (2.9,2.9) {$7$};
		
		\begin{scope}   
		\tikzstyle{every node}=[auto=right]   
		
		\draw [-triangle 45] (v2) to (v1);
		\draw [-triangle 45] (v1) to[bend right] (v2);
		\draw [-triangle 45] (v1) to (v5);
		\draw [-triangle 45] (v5) to[bend right] (v1);
		\draw [-triangle 45] (v5) to (v7);
		\draw [-triangle 45] (v7) to[bend right] (v5);
		\draw [-triangle 45] (v7) to (v6);
		\draw [-triangle 45] (v6) to[bend right] (v7);
		\draw [-triangle 45] (v6) to (v4);
		\draw [-triangle 45] (v4) to[bend right] (v6);
		\draw [-triangle 45] (v4) to (v3);
		\draw [-triangle 45] (v3) to[bend right] (v4);
		\draw [-triangle 45] (v3) to (v2);
		\draw [-triangle 45] (v2) to[bend right] (v3);
	
		\end{scope}
		\end{tikzpicture}}}
\subfigure[]{
	\scalebox{0.79}{
	\begin{tikzpicture}[xscale=2]
	\tikzstyle{every node}=[color=black,draw,circle,minimum size=0.9cm]
	\node (v1) at (-1.2,3.1) {$1$};
	\node (v2) at (-2.5,0) {$2$};
	\node (v3) at (-0.5, -2.5) {$3$};
	\node (v4) at (2,-1.9) {$4$};
	\node (v5) at (0.5,4) {$5$};
	\node (v6) at (2.6,0.5) {$6$};
	\node (v7) at (2.9,2.9) {$7$};
	
	\begin{scope}   
	\tikzstyle{every node}=[auto=right]   
	\draw [-triangle 45] (v2) to (v5);
	\draw [-triangle 45] (v2) to (v1);

	\draw [-triangle 45] (v1) to (v5);
	
	\draw [-triangle 45] (v5) to (v7);
	
	\draw [-triangle 45] (v7) to (v6);

	\draw [-triangle 45] (v6) to (v4);

	\draw [-triangle 45] (v4) to (v3);

	\draw [-triangle 45] (v3) to (v2);
	
	\draw [-triangle 45] (v1) to (v7);
	\draw [-triangle 45] (v5) to (v6);
	\draw [-triangle 45] (v6) to (v3);
	\draw [-triangle 45] (v4) to (v2);
	\draw [-triangle 45] (v3) to (v1);
	\draw [-triangle 45] (v4) to (v5);
	\draw [-triangle 45] (v2) to (v5);
	\draw [-triangle 45] (v7) to [bend left](v4);
	\end{scope}
		\end{tikzpicture}}}
\caption{(a) A $2$-vertex-twinless connected graph. (b) An optimal solution for the minimum $2$-vertex-connected spanning subgraph problem. (c) An optimal solution for the minimum $2$-vertex-twinless connected spanning subgraph problem}
\label{figure:exampleoptimalsolutions}
\end{figure}

In $2000$,	Cheriyan and Thurimella \cite{CT00} presented a $(1+1/k)$-approximation algorithm for the minimum $k$-vertex-connected spanning subgraph problem. In $2011$, Georgiadis \cite{Georgiadis11} improved the running time of this algorithm when $k=2$ and gave a linear time $3$-approximation algorithm for the M2VCS problem. The concept of twinless strongly connected components was first introduced by Raghavan \cite{Raghavan06} in $2006$. Raghavan \cite{Raghavan06} showed that twinless strongly connected components of a directed graph can be found in linear time. Georgiadis \cite{G10} gave a linear time algorithm for testing whether a directed graph is $2$-vertex-connected. Italiano et al. \cite{ILS12,Italiano2010} proved that all the strong articulation points of a directed graph can be calculated in linear time. In $2019$, Jaberi \cite{Jaberi2019,Jaberi19,Jaberi122019} studied twinless articulation points and some related problems. 

In the following section we describe an approximation algorithm for the M2VTCS problem.	
\section{Approximation algorithm for the M2VTCS Problem}
In this section we present an approximation algorithm for the M2VTCS Problem. 
The following lemma shows a connection between the size of an optimal solution for M2VTCS Problem
\begin{lemma} \label{def:eachopt2tis2v}
	Let $G=(V,E)$ be a $2$-vertex-twinless connected graph. Let $E_{2v}\subseteq E$ be an optimal solution for the M2VCS problem. Then every optimal solution for the M2VTCS problem is also a feasible solution for the M2VCS problem and has size at least $|E_{2v}|$.
\end{lemma}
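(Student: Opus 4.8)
The plan is to reduce both assertions of the lemma to a single structural implication: that every $2$-vertex-twinless connected graph is $2$-vertex-connected. Once this is established, the size bound is immediate, since $E_{2v}$ is by hypothesis a \emph{minimum} cardinality feasible solution for the M2VCS problem, so any competing feasible M2VCS solution can only be larger.

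First I would record the elementary fact that twinless strong connectivity implies strong connectivity: by definition a twinless strongly connected graph is in particular strongly connected. The key step is then to lift this to the level of articulation points. Let $H=(V,E_{t})$ be an arbitrary optimal solution for the M2VTCS problem, so that $H$ is $2$-vertex-twinless connected. I claim $H$ has no strong articulation point. Arguing by contraposition, suppose some vertex $v$ were a strong articulation point of $H$; then $H-v$ is not strongly connected, hence not twinless strongly connected, so $v$ would be a twinless articulation point of $H$. But a $2$-vertex-twinless connected graph has no twinless articulation points, a contradiction. Thus $H$ has no strong articulation point.

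Next I would assemble the definition of $2$-vertex connectivity for $H$: it is strongly connected (being twinless strongly connected), it satisfies $|V|\geq 3$ (this is part of the $2$-vertex-twinless connectivity of $G$, and $H$ spans the same vertex set $V$), and it has no strong articulation point by the previous paragraph. Hence $H$ is $2$-vertex-connected. Since $E_{t}\subseteq E$ and $(V,E_{t})$ spans all of $V$, this shows $E_{t}$ is a feasible solution for the M2VCS problem, which is the first assertion. The size bound then follows directly: as $E_{2v}$ is a minimum cardinality M2VCS solution, every feasible M2VCS solution---and in particular the optimal M2VTCS solution $E_{t}$---satisfies $|E_{t}|\geq|E_{2v}|$.

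I expect the only genuinely delicate point to be the articulation-point implication in the second paragraph. One must verify that the vertex-\emph{removal} operation underlying the definition of a strong articulation point coincides with the one underlying a twinless articulation point, so that ``$v$ is a strong articulation point of $H$'' really does force ``$v$ is a twinless articulation point of $H$'', and not merely the reverse implication. Everything else in the argument is purely definitional, and no estimate or construction beyond these inclusions of properties is required.
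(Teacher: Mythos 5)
Your proposal is correct and follows essentially the same route as the paper: both arguments hinge on the observation that a strong articulation point is necessarily a twinless articulation point (since a non-strongly-connected subgraph cannot be twinless strongly connected), so a $2$-vertex-twinless connected subgraph has no strong articulation points and is therefore a feasible M2VCS solution, whence the size bound follows from the optimality of $E_{2v}$. Your version merely makes explicit the contrapositive step that the paper states tersely.
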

\begin{proof}
	Suppose that $E_{2t}$ is an optimal solution for the M2VTCS problem. Then the subgraph $(V,E_{2t})$ does not contain any twinless articulation points. The edge subset $E_{2t}$ is a feasible solution for the M2VCS problem since the subgraph $(V,E_{2t})$ has no strong articulation points.
\end{proof}
If we contract each twinless strongly connected component of a strongly connected graph $G$ into a single supervertex, we obtain a directed graph, called the TSCC component graph of $G$. Let $G=(V,E)$ be a twinless strongly connected graph and let $G^{1}=(V,E^{1})$ be a strongly connected subgraph of $G$ such that $G^{1}$ contains at least two twinless strongly connected components. The following lemma shows how to make $G^{1}$ twinless strongly connected by adding some edges.

\begin{lemma}\label{def:convertsctotcalemma }
	Let $G=(V,E)$ be a twinless strongly connected graph, and let  $G^{1}=(V,E^{1})$ be a strongly connected subgraph of $G$ such that $G^{1}$ is not twinless strongly connected. Let $(v,w)\in E\setminus E^{1}$ such that $v,w$ are not in the same twinless strongly connected component of $G^{1}$.	
	If $G^{1}$ contains $k$ twinless strongly connected components, then the number of twinless strongly connected components in the subgraph $(V,E^{1} \cup \left\lbrace (v,w)\right\rbrace)$ is less than $k$.  
\end{lemma}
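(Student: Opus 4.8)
The plan is to translate twinless strong connectivity into a statement about the underlying undirected graph. I would use the characterization that a strongly connected digraph is twinless strongly connected if and only if its underlying \emph{simple} undirected graph (obtained by replacing each pair of antiparallel edges, and each single directed edge, by one undirected edge) is $2$-edge-connected, i.e. bridgeless; this is essentially the viewpoint behind the twinless strongly connected components of \cite{Raghavan06}. Writing $U$ for the underlying simple undirected graph of $G^{1}$, this characterization identifies the twinless strongly connected components $C_{1},\dots,C_{k}$ of $G^{1}$ with the bridge-blocks ($2$-edge-connected components) of $U$, and identifies the underlying simple graph of the TSCC component graph of $G^{1}$ with the associated bridge-tree $T$, a tree on $k$ nodes whose $k-1$ edges are exactly the bridges of $U$.

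First I would record a structural fact about these bridges. Since $G^{1}$ is strongly connected, if $\{x,y\}$ is a bridge of $U$ that separates $V$ into the side of $x$ and the side of $y$, then crossing this cut in both directions forces both $(x,y)\in E^{1}$ and $(y,x)\in E^{1}$; hence every inter-component edge of $G^{1}$ occurs as an antiparallel pair. Applying this to $v$ and $w$, which lie in distinct components $C_{i}\ni v$ and $C_{j}\ni w$: if the undirected edge $\{v,w\}$ were already present in $U$ it would be a bridge, and would therefore force $(v,w)\in E^{1}$, contradicting $(v,w)\in E\setminus E^{1}$. Thus $\{v,w\}\notin U$, so adding the directed edge $(v,w)$ introduces a genuinely new undirected edge into the underlying simple graph.

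Next I would run the elementary edge-addition argument on the bridge-tree. The graph $(V,E^{1}\cup\{(v,w)\})$ is still strongly connected, and its underlying simple graph is $U+\{v,w\}$. Because $v$ and $w$ lie in different bridge-blocks, the unique path in $T$ between the nodes $C_{i}$ and $C_{j}$ has length at least one; the new edge $\{v,w\}$ closes this path into a cycle, so every bridge lying on it ceases to be a bridge and the corresponding bridge-blocks merge into a single $2$-edge-connected component. Hence the number of bridge-blocks of $U+\{v,w\}$ is at most $k-1$. Applying the characterization once more to the strongly connected graph $(V,E^{1}\cup\{(v,w)\})$, its twinless strongly connected components coincide with these bridge-blocks, so their number is strictly less than $k$, which is the claim.

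The main obstacle is the first, structural, step rather than the counting. Everything rests on the characterization that twinless strong connectivity of a strongly connected digraph is equivalent to $2$-edge-connectivity of its underlying simple graph, together with the observation that inter-component bridges are realized by antiparallel pairs. Once these are established, the identification of TSCCs with bridge-blocks reduces the lemma to the classical fact that adding an edge between two distinct $2$-edge-connected components strictly decreases their number. A careful write-up would either invoke this characterization from \cite{Raghavan06} or prove it directly by exhibiting, inside each bridge-block, a strongly connected spanning subgraph that uses no antiparallel pair, and by showing that a bridge of $U$ cannot be bypassed.
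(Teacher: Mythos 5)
Your proof is correct and rests on the same structural fact the paper uses, namely Raghavan's theorem that the TSCC component graph of a strongly connected digraph is a tree whose edges correspond to antiparallel pairs; you simply recast this in the equivalent language of bridges and $2$-edge-connected components of the underlying undirected graph, whereas the paper closes a directed cycle through a simple $w$--$v$ path and invokes Raghavan's merging lemma. Your explicit check that $\{v,w\}$ is a genuinely new undirected edge (because an inter-component bridge would force $(v,w)\in E^{1}$) is a point the paper's proof leaves implicit, so your write-up is, if anything, slightly more careful on that step.
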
 
\begin{proof}
Since the subgraph $G^{1}$ is not twinless strongly connected, there are two distinct twinless strongly connected components $C_{1},C_{2}$ of $G^{1}$ such that $v \in C_{1} ,w \in C_{2}$. Suppose that  $G^{1}_{tscc}=(V^{1}_{tscc}.E^{1}_{tscc})$ is the TSCC component graph of the subgraph $G^{1}$. By [\cite{Raghavan06}, Theorem $1$], the underlying graph of $G^{1}_{tscc}$ is a tree and each edge in this tree corresponds to antiparallel edges in $G^{1}$. Thus, there exists a simple path $p$ from $w$ to $v$ in $G^{1}$ such that neither $(v,w)$ nor $(w,v)$ belongs to $p$.
The path $p$ together with $(v,w)$ forms a simple cycle in $(V,E^{1} \cup \left\lbrace (v,w) \right\rbrace $. Therefore,
 the vertices $v$ and $w$ belong to the same twinless strongly connected component in $(V,E^{1} \cup \left\lbrace (v,w) \right\rbrace $. By [\cite{Raghavan06}, Lemma $1$], the vertices of $C_{1} \cup C_{2}$ are in the same twinless strongly connected component of the subgraph $(V,E^{1} \cup \left\lbrace (v,w) \right\rbrace $.
	
\end{proof}

 Algorithm \ref{algo:2vtto2tsubgraphalgorithm} shows an approximation algorithm for the minimum $2$-vertex-twinless connected spanning subgraph problem.
\begin{figure}[htbp]
	\begin{myalgorithm}\label{algo:2vtto2tsubgraphalgorithm}\rm\quad\\[-5ex]
		\begin{tabbing}
			\quad\quad\=\quad\=\quad\=\quad\=\quad\=\quad\=\quad\=\quad\=\quad\=\kill
			\textbf{Input:} A $2$-vertex-twinless connected graph $G=(V,E)$ \\
			
			\textbf{Output:} a $2$-vertex-twinless connected subgraph $G_{2t}=(V,E_{2t})$\\
			{\small 1}\> calculate a minimal $2$-vertex-connected subgraph $G_{2v}=(V,E_{2v})$ of $G$.\\
			{\small 2}\> $E_{2t} \leftarrow E_{2v}$\\
			{\small 3}\> compute the twinless articulation points of $G_{2v}$.\\
			{\small 4}\> \textbf{for} each twinless articulation point $x\in V$ \textbf{do} \\
			{\small 5}\>\> \textbf{while} $G_{2t}\setminus\left\lbrace x \right\rbrace $ is not twinless strongly connected \textbf{do}\\ 
		{\small 6}\>\>\> identify the twinless strongly connected components of $G_{2t}\setminus\left\lbrace x \right\rbrace $\\
		{\small 7}\>\>\> find an edge $(v,w) \in E\setminus E_{2t}$ such that $v,w $ are in distinct\\
		{\small 8}\>\>\>\>twinless strongly connected components of $G_{2t}\setminus\left\lbrace x \right\rbrace $.\\
		{\small 9}\>\>\> add the edge $(v,w)$ to $E_{2t}$.
	
		\end{tabbing}
	\end{myalgorithm}
\end{figure}

\begin{lemma} \label{def:2vtto2tsubgraphalgorithmiscorrwect}
	The output of Algorithm \ref{algo:2vtto2tsubgraphalgorithm} is $2$-vertex-twinless-connected.
\end{lemma}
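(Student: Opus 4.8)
The plan is to show that Algorithm~\ref{algo:2vtto2tsubgraphalgorithm} produces a graph $G_{2t}=(V,E_{2t})$ that is twinless strongly connected and has no twinless articulation points, which by definition establishes $2$-vertex-twinless connectivity. First I would observe that the algorithm starts from $E_{2t}\leftarrow E_{2v}$, where $G_{2v}$ is a minimal $2$-vertex-connected subgraph of $G$. Since $G_{2v}$ is strongly connected (being $2$-vertex-connected), so is every supergraph obtained by adding edges; hence $G_{2t}$ remains strongly connected throughout, and in particular twinless strong connectivity of the whole graph only needs to be maintained, not established from scratch. The core of the argument concerns the removal of twinless articulation points one at a time in the outer \textbf{for} loop.

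The key step is to analyze the inner \textbf{while} loop for a fixed twinless articulation point $x$. Each iteration identifies the twinless strongly connected components of $G_{2t}\setminus\{x\}$ and adds an edge $(v,w)\in E\setminus E_{2t}$ joining two distinct such components. By Lemma~\ref{def:convertsctotcalemma } (applied to the graph $G_{2t}\setminus\{x\}$ in place of $G^{1}$), adding such an edge strictly decreases the number of twinless strongly connected components of $G_{2t}\setminus\{x\}$. Therefore the inner loop terminates after finitely many iterations, at which point $G_{2t}\setminus\{x\}$ is twinless strongly connected, i.e.\ $x$ is no longer a twinless articulation point of the current $G_{2t}$. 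The main obstacle I anticipate is two-fold. First, I must justify that a suitable edge $(v,w)\in E\setminus E_{2t}$ always exists whenever $G_{2t}\setminus\{x\}$ is not yet twinless strongly connected; this follows because the input $G$ is $2$-vertex-twinless connected, so $G\setminus\{x\}$ is twinless strongly connected, hence $G$ must contain an edge crossing any cut separating two twinless strongly connected components of the subgraph $G_{2t}\setminus\{x\}$, and such a crossing edge cannot already lie in $E_{2t}$. Second, and more delicately, I must verify that adding edges while processing one twinless articulation point $x$ does not \emph{create} a new twinless articulation point at some vertex $x'$ that was already handled, which would break the overall correctness.

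To address this second point I would argue that adding an edge never destroys twinless strong connectivity of any $G_{2t}\setminus\{x'\}$: adding edges can only merge twinless strongly connected components and can never split them (again invoking the monotonicity implicit in Lemma~\ref{def:convertsctotcalemma } and in [\cite{Raghavan06}, Lemma~$1$]). Concretely, if $G_{2t}\setminus\{x'\}$ was twinless strongly connected before an edge addition, then after adding any edge $(v,w)$ with $v,w\neq x'$ it remains twinless strongly connected, since every pair of vertices reachable by the required antiparallel-path structure before the addition remains reachable afterward. If the added edge is incident to $x'$, it simply disappears upon deleting $x'$ and has no effect on $G_{2t}\setminus\{x'\}$. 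Hence the property ``$x'$ is not a twinless articulation point'' is preserved once achieved, and is never disturbed by subsequent iterations.

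Combining these observations, I would conclude the proof as follows: the set $T$ of twinless articulation points of the initial $G_{2v}$ is finite, the outer loop processes each $x\in T$ exactly once, each inner loop terminates with $G_{2t}\setminus\{x\}$ twinless strongly connected, and this property persists for all previously processed vertices by the monotonicity argument above. After the outer loop completes, for every $x\in T$ the graph $G_{2t}\setminus\{x\}$ is twinless strongly connected. Since no vertex outside $T$ was a twinless articulation point of $G_{2v}$ and no new twinless articulation points can be created by edge additions, the final graph $G_{2t}$ has no twinless articulation points at all. As $G_{2t}$ is also twinless strongly connected and $|V|\geq 3$, it is $2$-vertex-twinless connected, which is what we wanted to show. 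I expect the monotonicity claim of the previous paragraph to be the principal technical point requiring careful justification.
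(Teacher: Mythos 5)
Your proof takes the same route as the paper's: invoke Lemma~\ref{def:convertsctotcalemma } to show that each iteration of the while-loop strictly decreases the number of twinless strongly connected components of $G_{2t}\setminus\{x\}$, so the loop terminates with $x$ no longer a twinless articulation point. The paper's proof is essentially just that one sentence, whereas the two points you single out as delicate --- the existence of a crossing edge in $E\setminus E_{2t}$ (which holds because otherwise $G\setminus\{x\}$ and $G_{2t}\setminus\{x\}$ would have the same twinless strongly connected components, contradicting that $G$ is $2$-vertex-twinless connected) and the persistence of twinless strong connectivity of $G_{2t}\setminus\{x'\}$ for already-processed vertices $x'$ under later edge additions --- are left implicit there, and your monotonicity argument correctly closes both gaps.
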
	
\begin{proof}
For every vertex $x \in V$, by Lemma \ref{def:convertsctotcalemma } the while-loop is able to make the subgraph $G_{2t}\setminus\left\lbrace x \right\rbrace $  twinless strongly connected since lines $6$--$9$ decrease the number of twinless strongly connected components at least $1$. 
\end{proof}
\begin{theorem}
	Algorithm \ref{algo:2vtto2tsubgraphalgorithm} achieves an approximation factor of $(2+a_{t}/2)$, where $a_{t}$ is the number of twinless articulation points in the minimal $2$-vertex-connected subgraph $G_{2v}$.
\end{theorem}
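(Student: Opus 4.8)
The plan is to bound the output size $|E_{2t}|$ by controlling its two contributions separately: the edges of the initial minimal $2$-vertex-connected subgraph $G_{2v}$ (line $1$), and the edges inserted by the for-loop (lines $4$--$9$). Write $n=|V|$ and let $\mathrm{OPT}$ be the size of an optimal M2VTCS solution. The first step is a lower bound on $\mathrm{OPT}$: since a $2$-vertex-twinless connected graph is in particular $2$-vertex-connected, every vertex of an optimal solution has in-degree and out-degree at least $2$, and summing in-degrees over all vertices gives $\mathrm{OPT}\geq 2n$. (This is consistent with Lemma \ref{def:eachopt2tis2v}, which already guarantees $\mathrm{OPT}\geq|E_{2v}^{\ast}|$ for an optimal M2VCS solution $E_{2v}^{\ast}$.)

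The second step bounds $|E_{2v}|$. Because $G_{2v}$ is an inclusion-minimal $2$-vertex-connected spanning subgraph, the standard edge bound for minimally $2$-vertex-connected digraphs gives $|E_{2v}|\leq 4n$. Together with $\mathrm{OPT}\geq 2n$ this yields $|E_{2v}|\leq 4n\leq 2\cdot 2n\leq 2\,\mathrm{OPT}$, which contributes the additive $2$ to the ratio.

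The third step bounds the number of edges inserted in the loop. Fix a twinless articulation point $x$ of $G_{2v}$, and consider the graph $G_{2t}\setminus\{x\}$ at the moment the algorithm starts handling $x$. Since $G_{2t}\supseteq G_{2v}$ and $G_{2v}$ is $2$-vertex-connected, $G_{2t}\setminus\{x\}$ is strongly connected; by [\cite{Raghavan06}, Theorem $1$] its TSCC component graph has a tree as underlying graph, so on $n-1$ vertices it has at most $n-1$ twinless strongly connected components. By Lemma \ref{def:convertsctotcalemma }, each pass of lines $6$--$9$ adds one edge and strictly decreases this count, hence the while-loop for $x$ terminates after at most $n-2$ insertions. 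Moreover, adding edges can never increase the number of twinless strongly connected components, so once $G_{2t}\setminus\{x\}$ has been made twinless strongly connected it remains so while the remaining articulation points are processed; the single for-loop pass therefore suffices, and over all $a_t$ twinless articulation points at most $a_t(n-2)<a_t n$ edges are added. Invoking $\mathrm{OPT}\geq 2n$ again, $a_t n=\tfrac{a_t}{2}\cdot 2n\leq\tfrac{a_t}{2}\,\mathrm{OPT}$.

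Assembling the bounds gives
\[
|E_{2t}|=|E_{2v}|+(\text{edges added})\leq 2\,\mathrm{OPT}+\tfrac{a_t}{2}\,\mathrm{OPT}=\Bigl(2+\tfrac{a_t}{2}\Bigr)\mathrm{OPT},
\]
which is the claimed approximation factor. I expect the main obstacle to be the edge bound $|E_{2v}|\leq 4n$ of the second step, as it is the only point where a genuine structural theorem about minimally $2$-vertex-connected digraphs is required; the loop count in the third step then follows cleanly from Lemma \ref{def:convertsctotcalemma } and the tree structure of the TSCC component graph. A secondary point deserving care is the monotonicity claim that insertions made for later articulation points cannot re-introduce a twinless articulation point handled earlier, which I would justify by noting that the number of twinless strongly connected components never grows under edge additions.
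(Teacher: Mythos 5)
Your proposal is correct and follows essentially the same route as the paper: lower-bounding $\mathrm{OPT}$ by $2n$ via the observation that an optimal M2VTCS solution is $2$-vertex-connected, bounding $|E_{2v}|\leq 4n$ by the Edmonds--Mader results on minimal $2$-vertex-connected digraphs, and bounding the loop insertions by roughly $a_t n$ using the fact that each insertion strictly decreases the number of twinless strongly connected components. Your write-up is in fact somewhat more careful than the paper's (the explicit division by $\mathrm{OPT}\geq 2n$, the $n-2$ iteration count, and the monotonicity remark that later insertions cannot undo earlier repairs are all left implicit there), but no new idea is involved.
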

\begin{proof}
	By Lemma \ref{def:eachopt2tis2v}, each optimal solution for the M2VTCS problem has at least $2n$ edges. The number of edges added in Line $9$ to $E_{2t}$ is at most $a_{t}(n-1)$. Furthermore, the results of Edmonds \cite{Edmonds72} and Mader \cite{Mader85} imply that $|E_{2v}|\leq 4n$. Therefore, $|E_{2t}|\leq 4n+a_{t}(n-1)$
\end{proof}
\begin{Theorem}
	Algorithm \ref{algo:2vtto2tsubgraphalgorithm} runs in $O(n^{2}m)$ time.
\end{Theorem}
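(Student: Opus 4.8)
The plan is to bound the cost of each phase of Algorithm \ref{algo:2vtto2tsubgraphalgorithm} separately and then check that the main loop dominates. First I would account for the preprocessing in Lines $1$--$3$. A minimal $2$-vertex-connected subgraph in Line $1$ can be obtained by starting from $G$ and repeatedly testing, for each edge, whether its deletion preserves $2$-vertex-connectivity, deleting it when it does. Since a single $2$-vertex-connectivity test runs in linear time (Georgiadis \cite{G10}) and there are at most $m$ edges to examine, this phase costs $O(m^{2})$. Because a digraph on $n$ vertices has $m=O(n^{2})$ edges, we have $m^{2}=O(n^{2}m)$, so Line $1$ stays within the claimed bound. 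For Line $3$ I would compute the twinless articulation points of $G_{2v}$ by deleting each vertex in turn and testing twinless strong connectivity; since twinless strongly connected components can be identified in linear time (Raghavan \cite{Raghavan06}), each of the $n$ tests costs $O(m)$, giving $O(nm)$ overall.

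The heart of the analysis is the nested loop in Lines $4$--$9$. The outer \textbf{for}-loop iterates over the $a_{t}$ twinless articulation points of $G_{2v}$. For a fixed $x$, I would bound the number of \textbf{while}-loop iterations: the graph $G_{2t}\setminus\{x\}$ has at most $n-1$ vertices and hence at most $n-1$ twinless strongly connected components, and by Lemma \ref{def:convertsctotcalemma } every edge added in Line $9$ strictly decreases the number of twinless strongly connected components. Consequently the \textbf{while}-loop performs at most $n-1$ iterations per $x$. Each iteration costs $O(m)$: testing the loop condition and identifying the twinless strongly connected components in Line $6$ are linear-time (Raghavan \cite{Raghavan06}), and scanning $E\setminus E_{2t}$ in Lines $7$--$8$ for an edge joining two distinct components is also $O(m)$. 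Hence the loop for a single $x$ takes $O(nm)$ time, and over all $a_{t}$ twinless articulation points the main loop takes $O(a_{t}\,n\,m)$ time.

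Combining the phases, the total running time is $O(m^{2})+O(nm)+O(a_{t}\,n\,m)$, and since $a_{t}\leq n$ the main loop dominates, yielding $O(n^{2}m)$. The step I expect to require the most care is the bound on the number of \textbf{while}-loop iterations: it hinges on invoking Lemma \ref{def:convertsctotcalemma } to guarantee a strict decrease in the number of twinless strongly connected components, and on the fact that a decreasing edge always exists in $E\setminus E_{2t}$, which follows because the input graph $G$ is $2$-vertex-twinless connected. I would also verify that recomputing the twinless strongly connected components from scratch in each iteration (rather than maintaining them incrementally) is precisely what fixes the per-iteration cost at $O(m)$ and therefore the per-vertex cost at $O(nm)$, consistent with the bound of at most $a_{t}(n-1)$ added edges used in the approximation analysis.
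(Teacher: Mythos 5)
Your proof is correct and follows essentially the same route as the paper's: the dominant cost is the nested loop, bounded by at most $n$ choices of $x$ times at most $n-1$ while-iterations (guaranteed by Lemma \ref{def:convertsctotcalemma }) times $O(m)$ per iteration using Raghavan's linear-time twinless-strongly-connected-components algorithm. You are in fact somewhat more complete than the paper, which does not explicitly charge for computing the minimal $2$-vertex-connected subgraph in Line $1$ and cites Jaberi's $O((n-s_{a})m)$ algorithm for Line $3$ where you use $n$ brute-force linear-time tests; your $O(m^{2})\subseteq O(n^{2}m)$ accounting for Line $1$ is a welcome addition rather than a deviation.
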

\begin{proof}
	The twinless strongly connected components of a directed graph can be found in linear time using Raghavan's algorithm \cite{Raghavan06}. 
Jaberi \cite{Jaberi2019} proved that twinless articulation points of a  strongly connected graph $G$ can be computed  in $O((n- s_{a})m)$ time, where $s_{a}$ is the number of the strong articulation points of $G$.
	By Lemma \ref{def:convertsctotcalemma }, the number of iterations of the while-loop is at most $n-1$ for every vertex $x \in V$.	
	 Moreover, the number of iterations of the for-loop is at most $n$.
\end{proof}
The running time of Algorithm \ref{algo:2vtto2tsubgraphalgorithm} can be improved to $O(n^{3})$ using union-find data structure.
\section{Open Problems}
The results of \cite{Edmonds72,Mader85} imply that every minimal $2$-vertex-connected spanning subgraph has at most $4n$ edges. An important question is whether the maximum number of every minimal $2$-vertex-twinless-connected spanning subgraph is at most $4n$. If every minimal $2$-vertex-twinless-connected spanning subgraph has at most $4n$ edges, then there is a $2$-approximation algorithm for the M2VTCS problem.

We also leave as open problem whether there is a better approximation algorithm for the M2VTCS problem.


\begin{thebibliography}{4}
	\bibitem{CT00} J. Cheriyan, R. Thurimella,
	Approximating Minimum-Size $k$-Connected Spanning Subgraphs via Matching. SIAM J. Comput. $30(2): 528$--$560 (2000)$
	\bibitem{Edmonds72} J. Edmonds, Edge-disjoint branchings. Combinatorial Algorithms, pages $91$--$96$,
	$1972$
	\bibitem{Georgiadis11} L. Georgiadis:
	Approximating the Smallest 2-Vertex Connected Spanning Subgraph of a Directed Graph. ESA $2011: 13$--$24$
	\bibitem {G10} L. Georgiadis, Testing $2$-vertex connectivity and computing pairs of vertex-disjoint s-t paths in digraphs, In Proc. $37$th ICALP, Part I, LNCS $6198$ ($2010$) $738$--$749$.
		\bibitem{Italiano2010} 	G. F. Italiano, L. Laura, F. Santaroni, Finding Strong Bridges and Strong Articulation Points in Linear Time. COCOA $(1) 2010: 157$--$169$
	\bibitem {ILS12} G.F. Italiano, L. Laura, F. Santaroni,
	Finding strong bridges and strong articulation points in linear time, Theoretical Computer Science $447$ ($2012$) $74$--$84$.
	\bibitem{Jaberi2019} R. Jaberi,
	Twinless articulation points and some related problems. CoRR abs/$1912.11799 (2019)$
	\bibitem{Jaberi19} R. Jaberi,
	Computing 2-twinless blocks. CoRR abs/$1912.12790 (2019)$
	\bibitem{Jaberi122019} R, Jaberi,
	2-edge-twinless blocks. CoRR abs/$1912.13347 (2019)$	
	\bibitem{Mader85} W. Mader,
	Minimal $n$-fach zusammenhängende Digraphen. J. Comb. Theory, Ser. B $38(2): 102$--$117 (1985)$
	\bibitem {Raghavan06} S. Raghavan, Twinless Strongly Connected Components, Perspectives in Operations Research, ($2006$) $285$--$304$.
\end{thebibliography}
\end{document}